\newcommand{\code}[1]{{\tt #1}}
\newcommand{\idp}{{\sc IDP}\xspace}
\newcommand{\idpdrie}{{\sc IDP3}\xspace}
\newcommand{\minisatid}{{\sc MiniSAT(ID)}\xspace}
\newcommand{\gidl}{{\sc GidL}\xspace}
\newcommand{\fodotidp}{{\sc FO($\cdot$)$^{\mathtt{IDP3}}$}\xspace}
\newcommand{\fodot}{{\sc FO($\cdot$)}\xspace}
\newcommand{\fodotid}{{\sc FO(ID)}\xspace}
\newcommand{\bigor}{\ensuremath{\bigor}}
\newcommand{\theory}{\ensuremath{T}\xspace}
\newcommand{\model}{\ensuremath{\mathcal{M}}\xspace}
\lstdefinelanguage{idp}{
	morekeywords=[1]{namespace,vocabulary,theory,structure,procedure,term},
	morekeywords=[2]{include,using,type,isa,contains,partial,extern,LFD,GFD},
	morekeywords=[3]{int,float,char,string,nat},
	morekeywords=[4]{if,then,else,for,end},
	morecomment=[s]{/*}{*/},	
	morecomment=[l]{//}
}
\newtheorem{definition}{Definition}
\newtheorem{theorem}{Theorem}
\newcommand{\old}[1]{\textcolor{gray}{}}
\newcommand{\hidden}[1]{}
\newcommand{\ignore}[1]{}
\newcommand{\listingref}[1]{Listing~\ref{#1}}
\newcommand{\idpcode}[1]{\lstinline{#1}}
\title[Predicate Logic as a Modeling Language]
{Predicate Logic as a Modeling Language:\\
Modeling and Solving some Machine Learning and Data Mining Problems with \idpdrie}
\author[Bruynooghe et al.]
{MAURICE BRUYNOOGHE, HENDRIK BLOCKEEL, BART BOGAERTS, \authorbreak
  BROES DE CAT,  STEF DE POOTER, JOACHIM JANSEN,  \authorbreak ANTHONY
  LABARRE, JAN RAMON, MARC DENECKER, \\ Department of
  Computer Science, KU Leuven \and SICCO VERWER \\Radboud Universiteit
  Nijmegen, Institute for Computing and Information Sciences \\
\email{firstname.secondname@cs.kuleuven.be, siccoverwer@gmail.com}}
\begin{document}
\maketitle
\begin{abstract}

  This paper provides a gentle introduction to problem solving with
  the \idpdrie system. The core of \idpdrie is a finite model
  generator that supports first order logic enriched with types,
  inductive definitions, aggregates and partial functions. It offers
  its users a modeling language that is a slight extension of
  predicate logic and allows them to solve a wide range of search
  problems.
  Apart from a small introductory example, applications are selected
  from problems that arose within machine learning and data mining
  research. These research areas have recently shown a strong interest
  in declarative modeling and constraint solving as opposed to
  algorithmic approaches. The paper illustrates that the \idpdrie
  system can be a valuable tool for researchers with such an
  interest.

  The first problem is in the domain of stemmatology, a domain of
  philology concerned with the relationship between surviving variant
  versions of text.
  The second problem is about a somewhat related problem within
  biology where phylogenetic trees are used to represent the evolution
  of species.
  The third and final problem concerns the classical problem of learning
  a minimal automaton consistent with a given set of strings. For this
  last problem, we show that the performance of our solution comes
  very close to that of a state-of-the art solution. 
  For each of these applications, we analyze the problem, illustrate
  the development of a logic-based model and explore how alternatives
  can affect the performance.
\end{abstract}

\begin{keywords}
  Knowledge representation and reasoning, declarative modeling, logic
  programming, knowledge base systems, \fodot, \idp system,
  stemmatology, phylogenetic tree, deterministic finite state automaton.
\end{keywords}

\section{Introduction}


In his seminal paper, Kowalski~\citeyear{ifip/Kowalski74} proposed to
use first order predicate logic (FO) as a programming language. He
argued that it is possible to use deduction for computation by
associating a procedural interpretation to the Horn clause subset of
first-order logic.  These ideas found their incarnation in the
language Prolog.

Whereas Prolog uses deduction as inference method, other
inference methods exist. Most prominent is model generation as used in
propositional SAT solvers. Also the inference method of Constraint
Programming can be considered as model generation; indeed, its solvers
attempt to assign values to variables while satisfying a set of
constraints.
The last decades have witnessed tremendous progress in solver
technology for Constraint Programming (CP) and SAT solving. In CP,
this progress is at the basis of a shift from Constraint Programming
to Constraint Modeling\footnote{In this paper, we use the word model
  in two different meanings. Firstly, a
  model is  a structure that satisfies the theory, as in ``model generation''.  Secondly, a model is the result of modeling a problem domain. It is  a theory in logic, a formal specification of the problem domain. It should be clear from the context what is   intended.}.
Notorious examples are {\sc
  Essence}~\cite{constraints/FrischHJHM08} and
Zinc~\cite{constraints/MarriottNRSBW08}.  Within logic programming,
the introduction of stable semantics~\cite{iclp/GelfondL88} eventually
led to the Answer Set Programming (ASP)
paradigm~\cite{cacm/BrewkaET11} that, similar to SAT, uses model
generation instead of deduction for inference. Many ASP-based systems
exist. Examples are DLV~\cite{tocl/LeonePFEGPS02},
\texttt{clasp}~\cite{lpnmr/GebserKNS07a} and Smodels~\cite{lpnmr/SyrjanenN01}.

All this progress raises the question what is the status of logic as a
modeling language. SAT is restricted to propositional logic. It can be
considered as the assembler language for modeling. Indeed, there
are many examples of programs that generate SAT encodings to obtain
state-of-the-art solvers for various classes of problems. One can find
examples in the areas of planning and of generating deterministic
finite automata, to name just a few. However, SAT is not suited as a
language for developing models.  For what concerns ASP, it is an
expressive high level language but it is not based on predicate logic. Today,
many intricacies of stable model semantics are hidden in high level
ASP constructs such as constraints and choice rules; however, its two
forms of negation (``not'' and strong negation)~\cite{cacm/BrewkaET11}
clearly distinguish it from first-order logic; the deviation from
first-order-logic semantics could be an obstacle for newcomers.

Historically, predicate logic was always viewed as a very expressive
modeling language. This is remarkable given that anyone who used it
for modeling a practical domain will have experienced its
inconvenience in expressing certain common propositions. A clear
weakness is in expressing inductively definable concepts such as the
transitive closure of a binary relation. Another deficiency is in
expressing bounds on the cardinality or the sum of sets. Practical
modeling languages in Constraint Programming or ASP therefore support
some of these propositions. While ASP can express inductive
definitions, it is built on radically different foundations than FO. A
more conservative solution that preserves FO's foundations is to
extend it with suitable language constructs. For instance, it was
argued in several works, for example by~\citeN{tocl/DeneckerT08} and
\citeN{DeneckerV/KRR14}, that a rule set formalism under an extension
of the well-founded semantics~\cite{GelderRS91} is a natural formalism
to express the most common forms of inductive definitions. Such a
formalism can be integrated with FO in a conceptually clean way.  The
resulting logic was named \fodotid by~\citeN{tocl/DeneckerT08}. The
link between \fodotid and ASP was recently studied by
\citeN{Denecker12}.  Below, we use the notation \fodot to denote the
family of extensions of first-order logic.

In this paper, we explore the use of \fodotidp, the instance of the
\fodot family that is supported by \idpdrie, the current version of
the \idp Knowledge Base System~\cite{inap/DePooterWD11}. \fodotidp
extends first-order logic with inductive definitions, partial
functions, types and aggregates. The \idp system supports model
generation and model expansion \cite{MitchellT05,atcl/Wittocx13}  as inference
methods and is one of the fastest such systems \cite{LPNRM/Calimeri11}. Particular to the modeling language used here is the
combination of a purely declarative modeling language with a
procedural language that handles the interaction with the outside
world. Indeed, in contrast to Prolog, the control of the search can be
left to the solver and the user can concentrate on modeling.  As we
will illustrate in the paper, that does not mean that any correct
model will do; when performance matters, models have to be designed
with care. 

As for the organization of the paper, we start Section~\ref{sec:fodot}
with recalling the \fodot family of extensions of predicate
logic. Next, we introduce the \fodotidp instance of \fodot and the
\idp knowledge base system that supports \fodotidp as a modeling
language. The section continues with the shortest path problem as an
illustrative example.  After describing a very basic model, it
explores how alternative models affect the performance of the
underlying solver.

The other sections explore the use of the IDP system for solving some
real-world problems encountered in the domain of machine learning and
data mining by some of the authors.  Researchers in this domain have
become increasingly aware of the fact that data analysis problems come
in many different variants, which do not always fit the standard
algorithms well.  It is infeasible to develop algorithms for each
specific variant, but recently it has been shown that some standard
data mining problems, as well as their variants, can be modeled as
constraint problems, and solved by general-purpose solvers with
performance comparable to that of dedicated algorithms
\cite{DBLP:journals/ai/GunsNR11}.  Our discussion on the use of IDP for three
different tasks adds support for the claim that declarative modeling
may have an important role to play in machine learning and data
mining.

The first task, addressed in Section~\ref{sec:stemma} is in the domain
of a stemmatology, a part of philology that studies the relationship
between surviving variant versions of a text. A stemma is a family
tree that shows how different copies of the same text relate to each
other. These copies ---manuscripts--- are not identical, but
evolve. Manuscripts often do not have a single parent, different parts
can be copied from different parents, so a stemma is in fact a
directed acyclic graph. A typical task is to analyze the plausibility
of a stemma hypothesis. For this task, the philologist collects
datasets describing features of the text. The values of a feature
represent variant readings of a fragment of the text. A common
assumption is that the stemma has, for each variant, a unique
manuscript that is the source of the variant. The feature value is
unknown for some manuscripts, and the question is whether these
unknown values can be assigned such that there is indeed a unique
source for each feature value. In working out this task, we also
illustrate how the procedural side of \fodotidp allows the user to
organize a complete workflow.

The second task (Section~\ref{sec:CS}), although in the very different
domain of biology, is somewhat related to the previous one as it is
concerned with phylogenetic trees. Phylogeny is an area in which many
problems arise. Several problems have been tackled by means of Answer
Set Programming, see \citeN{Erdem11} for an overview. Here we address a
new problem in this area. Phylogenetic trees have in
their leaves a set of current species and the tree represents the
evolutionary relationship between them. Often there are several
equally plausible evolutionary explanations and hence different
phylogenetic trees. The question addressed here is: what is the
minimal supergraph that represents each of the individual trees. What
makes the problem difficult is that the correspondence between the
internal nodes of the different trees is unknown and has to be
guessed. Different guesses result in different supergraphs.

In Section~\ref{sec:dfa}, we study the well-known problem of learning
a minimal deterministic finite state automaton (DFA) that is
consistent with a given set of accepted and rejected strings. This is
a classical machine learning task for which competitions are
organized. A state-of-the-art method~\cite{Verwer,Verwer12}, winner of
the 2010 Stamina competition \cite{stamina}, solves it by a
problem-specific program that iteratively creates a SAT encoding and
applies a SAT-solver for an increasing number of states until a model
is found.
Here we explore to what extent a high  level \fodot formalization
can compete with a laboriously constructed encoding as a propositional
SAT problem.


These three problems can be abstracted as graph problems and are
NP-complete. Solving them inherently involves search; heuristics
are needed to guide the search towards solutions. Developing an
algorithm in a procedural language is time consuming, error-prone and
challenging. The use of a declarative modeling language liberates the
programmer from the task and allows him to devote more time to the
proper formalization. Moreover, the default heuristics of the
underlying solvers are often sufficient to obtain adequate solutions.

In Section~\ref{sec:concl}, we reflect on our achievements and discuss
where there is potential for further improvement.  Some of the
material in this paper is based on work of \citeN{iclp/Blockeeletall12} and,
for stemmatology, on  work of \citeN{Andrewsetal12}.

\section{\fodot and the \idp system}\label{sec:fodot}

First-order logic has a long tradition and a well understood semantics
but also some limitations with regard to its expressiveness, which
makes it not so well suited as a language for knowledge
representation. The most notorious problem is that it cannot naturally express transitive closures such as ``$x$ is reachable from $y$ if either $x$
and $y$ are connected or there exists a $z$ such that $x$ and $z$ are
connected and $x$ is reachable from $z$''.
Note that Prolog programmers can cope with transitive closure and that
the least Herbrand interpretation captures its meaning; actually, in
the first years of logic programming, many Prolog programmers did not
realize it was an issue in the knowledge representation community; at
the same time, many in the latter community were ignorant about
Prolog's expressiveness.
In the knowledge
representation community, there are two ways to work around the
limitation. On the one hand, one can introduce knowledge
representation languages with a semantics different from first-order
logic; on the other hand, one can enhance first-order logic with
additional constructs.  The former approach is taken by the ASP
community; the latter approach has been advocated in
\cite{tocl/DeneckerT08} where first-order logic was extended with (not
necessarily monotone) inductive definitions. It was argued that this
extension resulted in a very natural and expressive language whose
meaning was captured by a generalization of the well-founded semantics
introduced by \citeN{GelderRS91}. This extension was named \fodotid
and later work used the notation \fodot for a family of languages
extending first-order logic.

The FO extension used as modeling language throughout this paper
includes not only inductive definitions, but also partial functions,
types and aggregates. It is the extension supported by the \idpdrie
version of the \idp Knowledge Base System that was for the first time
described by \citeN{inap/DePooterWD11}\footnote{The examples used
  throughout the paper make use of IDP version 3.2.0.}. We denote this
extension as \fodotidp.

Functions, of which constants are a special case, are very convenient
in modeling. They are used in all the modeling examples of the
paper. While $n$-ary functions can be considered as syntactic sugar
for predicates with $n+1$ arguments, the use of functions makes models
more concise and readable. Indeed, when functions are represented as
predicates, the functional dependency between the input arguments and
the result needs to be represented as a separate constraint. Partial
functions give extra flexibility to the modeler. An example can be
found in the model of finite state automata in Section~\ref{sec:dfa}
where a state doesn't need a transition for all symbols in the
automaton's input alphabet.

In almost all applications, the universe is not uniform but contains
different types. Relations are typed. Quantification is ``naturally''
typed, namely, we quantify over objects of a type.
It is not difficult to make typing explicit in untyped predicate logic.
Yet, it requires an extra discipline of the user to make the
types in her quantifications and of her relations and functions explicit.
By introducing an explicit type system, even a simple many-sorted type
system, and a type checking and inference system (to discover type
clashes and to guess the types of variables and/or parameters), theories
become more compact and graceful. Moreover, a number of bugs can be
detected: syntactic errors in variable names, swapped or missing
arguments, unintended reuse of variables, etc. This is common wisdom. Indeed,
well-typed theories go wrong less often \cite{journals/jcss/Milner78}. The type system of
IDP3 is not needed from a computational point of view, but for above-mentioned
reasons, we often find it convenient.

Aggregates are another extension that contribute to the readability
and conciseness of models. Consider for example the constraint
expressing the functional dependency. One needs to express that there
is exactly one value (or in the case of a partial function at most one
value) for each combination of input arguments. The availability of
aggregates makes it a lot more convenient to express such
constraints. A study about the semantics of aggregates in definitions,
including the case of recursion, has been made by 
\citeN{tplp/PelovDB07}.

\subsection{The logical components of an \fodotidp model}

In this section we introduce the basic notions of an \fodotidp
model. We restrict ourselves to what is needed to understand the
examples later on in the paper.

An \fodotidp model consists of a number of logical components, namely
vocabularies, structures, terms, and theories. A {\em vocabulary}
declares the symbols to be used\footnote{Contrary to Prolog and ASP,
  the first character of a symbol has no bearing on its kind.}. A {\em
  structure} is used to specify the domain and the data; it can be viewed as a sort of database,  it provides a partial
(three-valued) interpretation of the symbols in the vocabulary. 
In the context of optimization problems, a {\em term} component declares
the numerical cost term to be optimized.
 A {\em theory} consists of FO formulas and definitions.
A {\em  definition} is a set of {\em rules} of the form
$\forall \bar x: P(\bar x) \leftarrow \varphi(\bar x).$
where $\varphi$ is an \fodotidp formula\footnote{Definitions have a
  lot in common with pure Prolog rules.}.
An \fodotidp \emph{formula} differs from FO formulas in two ways.
Firstly, \fodotidp is a many-sorted logic: every variable has an
associated {\em type} and every type an associated domain.  Moreover,
it is order-sorted: types can be subtypes of others.  Secondly,
besides the standard terms in FO, \fodotidp formulas can also have
aggregate terms: functions over a set of domain elements and
associated numeric values which map to the sum, product, cardinality,
maximum or minimum value of the set.

We write $\model\models\theory$ to denote that structure \model
satisfies theory \theory. With $x^\model$, we denote the
interpretation of $x$ under $\model$, where $x$ can be a formula or a
term.
Without going in full details, \model satisfies \theory when (i) every
FO formula $F$ of \theory is satisfied in \model ($F^\model$ is true)
and (ii) every definition of \theory is satisfied in \model. A
structure \model satisfies a definition $D$ when the well-founded
model construction on $D$ \cite{GelderRS91} that starts from $O$, the
restriction of $M$ to the predicates not defined in $D$, results in
\model. See \citeN{warrenbook14} for more details.

\subsection{The \idpdrie system}\label{sec:framework}

The \idpdrie system~\cite{inap/DePooterWD11} is a \emph{Knowledge Base
  System} (KBS) that intends to offer the user a range of inference
methods such as {model expansion}, {optimization}, {verification},
{symmetry breaking} and {grounding} and to make use of different state
of the art technologies including SAT, SAT Modulo
Theories~\cite{jacm/NieuwenhuisOT06}, Constraint Programming and
various technologies from Logic Programming.

In this paper, we make use of the inference methods {model expansion},
{satisfiability checking} and {model minimization}.  The most
important inference method is \emph{model expansion} discussed by
\citeN{MitchellT05} and further extended by \citeN{atcl/Wittocx13}.  The
idea of model expansion is to extend a partial structure (an
interpretation) into a full structure that satisfies all constraints
specified by the \fodotidp model. More formally, the task of model
expansion is, given a vocabulary $V$, a theory $T$ over $V$ and a
partial structure $S$ over $V$ (at least interpreting all types), to
find a structure \model that satisfies $T$ and expands $S$, i.e.,
\model is a model of the theory and the input structure $S$ is a
subset of \model.  In the \idpdrie system, this task is executed by
\idpcode{modelexpand(T,S)}. The result of the modelexpand procedure is
a list of models of $T$ that expand $S$. If the option
\idpcode{nbmodels} is set to a value $n$ different from $0$, \idpdrie will
stop searching for more models once it has found $n$
models.

{\em Satisfiability checking} is related to model
expansion. Calling \idpcode{sat(T,S)} in the \idpdrie system will
return true if and only if \idpcode{modelexpand(T,S)} would have
returned at least one model. However, since we are not interested in
the actual models, some optimizations can be done to speed up this
inference.

In case of {\em model minimization}, also a numerical cost term $t$ is
given. The task is to find a model \model of $T$ that expands $S$ such
that, for all other models \model' expanding $S$, $t^{\model} \leq
t^{\model'}$. Model minimization is activated by
\idpcode{minimize(T,S,t)} with $t$ referring to the term component
defining the term.

The \idpdrie system allows users to specify \fodotidp problem
descriptions. The basic overall structure of the logical components is
as in the following schema.
\begin{lstlisting}
	vocabulary V   { ... }  	theory T: V    { ... }
	term t: V      { ... }  	structure S: V { ... }
\end{lstlisting}
This schema defines a vocabulary \idpcode{V} which is then used as a
context in the theory \idpcode{T}, the term \idpcode{t} and the
structure \idpcode{S}. In general, several vocabularies can be
defined, even extending other vocabularies.

We use IDP syntax in the examples throughout the paper. Each IDP
operator has an associated logical operator, the main (non-obvious)
operators being: \code{\&}($\land$), \code{|}($\lor$),
$\sim$($\lnot$), \code{!}($\forall$), \code{?}($\exists$),
\code{<=>}($\equiv$), $\sim$\code{=}($\neq$).

A distinguishing feature of \fodotidp models is that they not only
consist of logical components but also have one or more procedural
components. These procedural components consist of procedural code
that can perform actions. Actions include the execution of an
inference method on a particular logical theory, but also the
presentation of results to the user. Procedures allow to glue together
a sequence of actions in a process that performs a task for the
user. The convention is that the user's task is performed by invoking
the procedure \idpcode{main()}. Such a task can start with procedural
code to prepare one or more structures from input files or databases,
can continue with performing a number of inference task on
combinations of theories with structures and can end with presenting
the results to the user. The procedural language has to be a flexible
and extensible scripting language that offers a smooth integration
with the C++ solvers of the \idp system. The IDP
system~\cite{inap/DePooterWD11} makes use of the
Lua~\cite{SPE/IerusalimschyFC96} scripting language for this
purpose. It allows us to treat the various logical components of an
\fodotidp theory as objects that can be manipulated from within the
procedures.

More information on the \idp system and in particular its \idpdrie
version is given by \citeN{lash08/WittocxMD08} and
\citeN{inap/DePooterWD11} and can be found at
\url{http://dtai.cs.kuleuven.be/krr/software/idp3}.

\subsection{An example: The shortest path problem}\label{sec:shrtpth}

As an illustration, we model the shortest path problem
(Listing~\ref{shortestpath}).  The {\em vocabulary} consists of a
single type, two constants and three predicates. The {\em structure}
specifies the given graph: the interpretation of the type
\texttt{node} (the domain elements \texttt{A, B, C,} and \texttt{D})
and of the predicate \texttt{edge(node,node)} (the domain atoms
\texttt{edge(A,B), edge(B,C), edge(C,D), and \texttt{edge(A,D)})}, as
well as the constants \texttt{from} (the domain element \texttt{A})
and \texttt{to} (the domain element \texttt{D}) which identify the
begin- and endpoint of the path searched for. The predicate
\texttt{edgeOnPath(node,node)} is used to represent the edges that
participate in a shortest path. It provides the base case of the
transitive relation \texttt{reaches(node,node)} which is defined in
the {\em theory} component. Definitions are given between ``\{'' and
``\}''.  Note that we use the most basic definition for the transitive
closure: we join the \texttt{reaches} relation with itself.

Besides this inductive (recursive) definition, the theory also
specifies the constraints expressing that the \texttt{edgeOnPath/2}
atoms included in a model of the theory indeed compose a simple path
from \texttt{from} to \texttt{to}. The first constraint, a universally
quantified implication, ensures that \texttt{edgeOnPath/2} atoms are
indeed \texttt{edge/2} atoms. This constraint explicitly mentions the
type of the quantified variables; however, this is optional; these
types can be inferred from the type declarations of the predicates of
the formula. The types of the quantified variables are omitted in the
following constraints.  The second constraint, a simple fact, imposes
that \texttt{reaches/2} includes the pair \texttt{(from,to)}. The
third constraint, a conjunction of negated formulas, states that the
\texttt{edgeOnPath/2} atoms should neither include an edge arriving in
\texttt{from} nor an edge leaving \texttt{to}.  The fourth constraint
is a universally quantified conjunction of two cardinality
constraints; it expresses that every node has less than 2 incoming
edges and less than 2 outgoing edges (i.e., that the path is
simple). The notation {\tt ?$<$2 y : edgeOnPath(y,x)} means that there
are strictly less than 2 y's that have an edge to x in the path. This
is syntactic sugar for the aggregate {\tt \#\{ y : edgeOnPath(y,x)\} <
  2}. This aggregate is a more concise formulation of the FO
constraint {\tt ! y1 y2 : edgeOnPath(y1,x) \& edgeOnPath(y2,x) =>
  y1=y2}.  Finally, the last constraint, another universally
quantified implication, states that the endpoints of selected edges
are reachable from \texttt{from} (i.e., that no edges are selected
that do not contribute to the path).

The {\em term} component of the model defines the term
\texttt{lengthOfPath} as an aggregate expression counting the number
of tuples in the \texttt{edgeOnPath} relation of a model of the
theory. Minimizing this term ensures that the path described in a
model of the theory is indeed a shortest path.

The {\em procedure} component
shows Lua code invoking the model minimization task and printing the
result. The Lua code treats the logical components as first-class
citizens and uses them as parameters in the method call activating the
solver. The annotation \texttt{[1]} directs the solver to return at
most one solution.


\begin{lstlisting}[caption={Calling main() solves the shortest path problem for the given data.},label={shortestpath}]
vocabulary sp_voc {
    type node
    from,to: node
    edge(node,node)
    edgeOnPath(node,node)
    reaches(node,node)
}
theory sp_theory1 : sp_voc {
    {! x y : reaches(x,y) <- edgeOnPath(x,y).
     ! x y z : reaches(x,y) <- reaches(x,z) & reaches(z,y).}

    ! x[node] y[node] : edgeOnPath(x,y) => edge(x,y). // (1)
    reaches(from,to). // (2)
    ~(? x : edgeOnPath(x,from)) & ~(? x : edgeOnPath(to,x)).//(3)
    ! x : (?<2 y : edgeOnPath(y,x)) &
          (?<2 y : edgeOnPath(x,y)). // (4)
    ! x y : edgeOnPath(x,y) => reaches(from,y). // (5)
}
structure sp_struct : sp_voc {
    node = {A..D} // shorthand for A,B,C,D
    edge = {A,B; B,C; C,D; A,D} // `;' separated list of tuples
    from = A
    to = D
}
term lengthOfPath : sp_voc { #{ x y : edgeOnPath(x,y) } }
procedure main() {
    /* Search a minimal model */
    sol = minimize(sp_theory1,sp_struct,lengthOfPath)[1]
    /* If no result is returned, no models exist */
    if (sol == nil)
    then print("No models exist.\n")
    else print(sol)
    end
}
\end{lstlisting}

The \idpdrie system performs model expansion and model minimization by
first reducing the problem to extended CNF, using the grounder
\gidl~\cite{jair/WittocxMD10} and subsequently calling the solver
\minisatid~\cite{sat/MarienWDB08}. The grounding process can be
fine-tuned using options for symmetry
breaking \cite{ictai/DevriendtBMDD12}, grounding with bounds
\cite{jair/WittocxMD10}, lazy grounding \cite{iclp/DeCatDS12},
etc. The solver is an extension of MiniSat \cite{sat/EenS03} with
support for aggregate expressions, inductive definitions and
branch-and-bound optimization. Recently, \minisatid was extended to
offer support for finite domain constraints, using the propagation
techniques described in \cite{toplas/SchulteS08} or, alternatively,
interfacing with the {\sc Gecode} Constraint Programming engine.

\subsection{Exploring the space of  models for the shortest path problem}\label{sec:user}

The above is a correct \idpdrie model that provides a declarative
solution to the problem at hand.
However, if performance matters, or the instances
are that large that the grounding cannot fit in memory, then other
models can be preferable. In the long run, perhaps an optimizer can
transform a simple model in a model for which model generation has a
better performance, but, with the current state of affairs, it is up
to the user to explore the design space and to look for better
performing models. We do so in this section for the shortest path
problem.

\begin{figure}[t]
  \centering
\includegraphics[width=\textwidth]{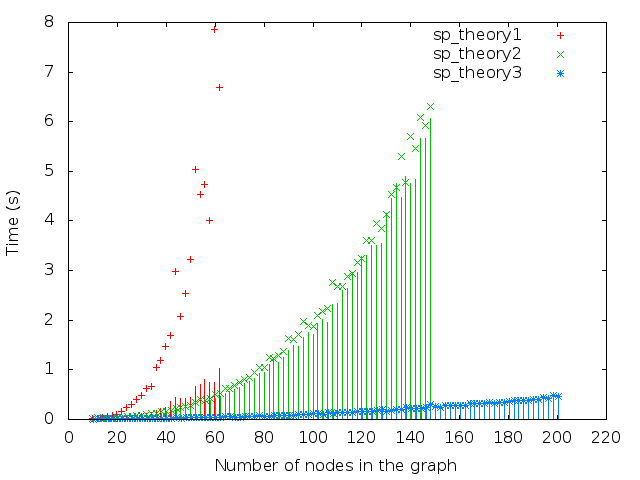}

  \caption{Different grounding and total running times for
    \idpcode{sp\_theory1}, \idpcode{sp\_theory2} and
    \idpcode{sp\_theory3}. Experiments are performed on graphs with an
    increasing number of nodes but a constant edge density. The
    vertical bar shows the grounding time, the part above the vertical
    bar is the solving time.}
  \label{fig:experiments1}
\end{figure}

\begin{lstlisting}[caption={Another definition for {\tt reaches}.},label={betterDef}]
theory sp_theory2 : sp_voc {
  { reaches(x,y) <- edgeOnPath(x,y).
    reaches(x,y) <- edgeOnPath(x,z) & reaches(z,y). }
  /* ... constraints as in sp_theory1 ... */
}
\end{lstlisting}

Prolog programmers would never define the {\tt reaches/2} predicate as
in theory {\tt sp\_theory1} of \listingref{shortestpath} but rather as
in theory {\tt sp\_theory2} of \listingref{betterDef}. Indeed, apart
from the risk of entering an infinite loop, {\tt reaches(x,z)} can
have many more solutions than {\tt edgeOnPath(x,z)} (which is bounded
by the number of {\tt edges}) and hence, the search space for Prolog's
proof procedure can be substantially larger\footnote{When using a
  system with tabling, the order in the body of the recursive clause
  is better inverted \cite{tplp/SwiftW12}.}. While model generation
cannot loop in the \idp system, the search space argument remains
valid. Comparing the runtime of both systems (see Figure
\ref{fig:experiments1}) reveals that the heuristics of the underlying
SAT solver do not compensate for the larger search space and that the
version of \listingref{betterDef} is substantially
faster\footnote{Using an Intel$^R$ Core$^{TM}$ i5-3550 CPU at 3.30GHz
  with 7.8 GB RAM running Ubuntu with the \idpdrie default
  options.}. However, there is another factor contributing for the
difference between both versions. Figure~\ref{fig:experiments1} splits
runtime in grounding time (the vertical bar) and solving time (above
the bar). For larger problems, one can observe that the grounding also
takes more time. The explanation is the difference in grounding size
(see Figure~\ref{fig:experimentsgrounding}). The size difference must
be attributed to the grounding of the recursive {\tt reaches(x,z)}
rule.  With $n$ nodes, there are $n^3$ instances of the recursive {\tt
  reaches/2} rule in {\tt sp\_theory1}. As for {\tt sp\_theory2}, it
follows from constraint (1) that {\tt edgeOnPath(x,z)} is false
whenever {\tt edges(x,z)} is false, hence, with $e$ the
number of edges, the number of instances is
limited to $e * n$, which is typically substantially smaller than
$n^3$.

\begin{figure}[t]
  \centering
  \includegraphics[width=\textwidth]{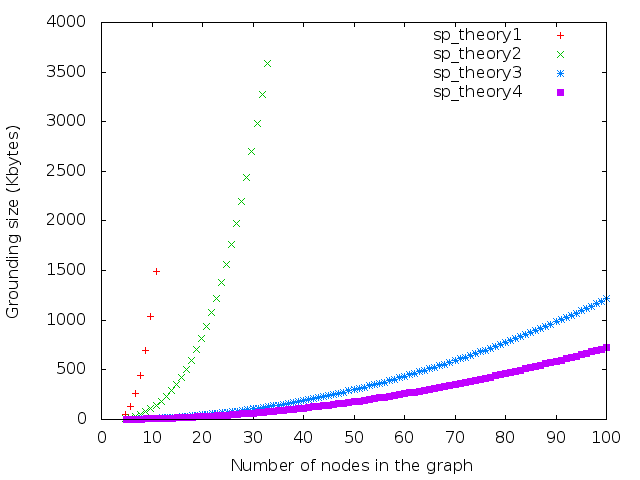}
  \caption{Grounding size for \idpcode{sp\_theory1},
    \idpcode{sp\_theory2}, \idpcode{sp\_theory3} and
    \idpcode{sp\_theory4}. Experiments are performed on graphs with an
    increasing number of nodes but a constant edge density.}
  \label{fig:experimentsgrounding}
\end{figure}

Although the performance has improved quite a bit, we see that it is
still rapidly increasing with the size of the graph. Moreover (see
Figure~\ref{fig:experiments1}) most of the runtime is grounding
time. Can we do better? The recursive {\tt reaches/2} rule has three
variables and remains expensive. The grounding has $n^2$ atoms {\tt
  reaches(n1,n2)}, expressing whether $n1$ and $n2$ are connected
while we are only interested in paths going from {\tt from} to {\tt
  to}. Hence we should better use a unary {\tt reachable} predicate
and define which points are reachable from {\tt from}. The new
versions of the {\tt vocabulary} and the {\tt theory} are shown in
\listingref{decreaseArity}. The {\tt term} and {\tt procedure} parts
are as in \listingref{shortestpath}.

\begin{lstlisting}[caption={A unary {\tt reachable} relation instead
    of the binary {\tt reaches relation}.},label={decreaseArity}]
vocabulary sp_voc2 {
    type node
    from,to: node
    edge(node,node)
    edgeOnPath(node,node)
    reachable(node)
}
theory sp_theory3: sp_voc2 {
    { reachable(from).
      reachable(y) <- edgeOnPath(x,y) & reachable(x). }

    ! x[node] y[node] : edgeOnPath(x,y) => edge(x,y). //(1)
    reachable(to). //(2)
    ~(? x : edgeOnPath(x,from)) & ~(? x : edgeOnPath(to,x)).//(3)
    ! x : (?<2 y : edgeOnPath(y,x)) &
          (?<2 y : edgeOnPath(x,y)). //(4)
    ! x y : edgeOnPath(x,y) => reachable(y). //(5)
}
\end{lstlisting}

As Figure~\ref{fig:experiments1} shows, this modification results in a
dramatic speed-up. Also the grounding size
(Figure~\ref{fig:experimentsgrounding}) is substantially reduced.

Constraints (3) and (4) are cardinality constraints on the number of
edges connected to the same node that can participate in a path. They
are redundant with respect to the minimization of the {\tt
  lengthOfPath} term. Indeed, paths of minimal length satisfy both
constraints. Dropping them, as in \listingref{removeConstraints},
spoils the clarity of the model; however, it further reduces the size
of the grounding as can be seen in
Figure~\ref{fig:experimentsgrounding}. The effect on the runtime is
negligible for small graphs and rather negative for larger ones as one
can observe in Figure~\ref{fig:experiments2}. The explanation is that
the removal of those constraints increases the search space. Indeed, partial
solutions violating constraints (3) and (4) are not immediately
rejected. This causes a lot more variance in the solving times.

\begin{lstlisting}[caption={Removing redundant constraints.},label={removeConstraints}]
theory sp_theory4: sp_voc {
    { reachable(from).
      reachable(y) <- reachable(x) & edgeOnPath(x,y). }

    ! x[node] y[node] : edgeOnPath(x,y) => edge(x,y). // (1)
    reachable(to).
}
\end{lstlisting}

\begin{figure}[t]
 \centering
\includegraphics[width=\textwidth]{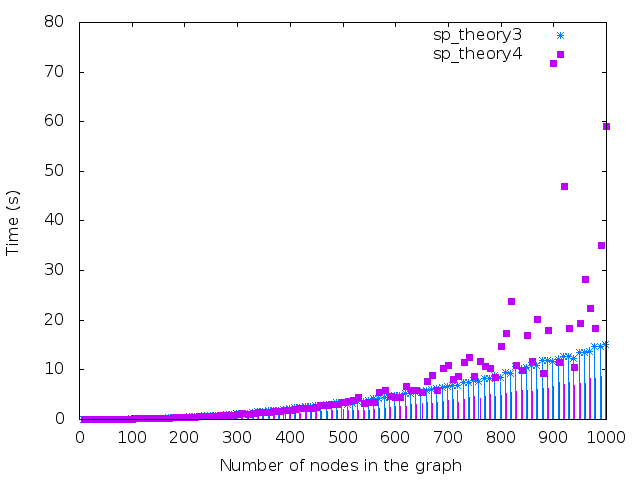}

\caption{Different grounding and total running times for \idpcode{sp\_theory3} and \idpcode{sp\_theory4}. Experiments are performed on graphs with an increasing number of nodes but a constant edge density.}
 \label{fig:experiments2}
\end{figure}

The above example, in which we solved a classical problem with
\idpdrie, illustrates the basic features of \fodotidp. It shows that
problem solving with \idpdrie is a quite different endeavor from
problem solving in other languages. This holds not only for procedural
languages, but also for a declarative language such as Prolog. It also
shows the importance of exploring various models when performance and
memory use matters.

\section{Stemmatology}\label{sec:stemma}

Before the invention of the printing press, texts were copied manually
by scribes.  This copying process was not perfect; scribes often
modified texts, either accidentally or intentionally.  As a result the
surviving copies of many old texts vary significantly.  No text
written before the invention of the printing press, and even up to the
end of the 18th century, when the habit of circulating texts in
manuscript form practically disappeared, can be read without a
preliminary critical analysis of its material witnesses.  This is the
purpose of stemmatology.  The Oxford English Dictionary defines the
field as ``the branch of study concerned with analyzing the
relationship of surviving variant versions of a text to each other,
especially so as to reconstruct a lost original.''

A stemma is a kind of ``family tree'' of a {\em tradition}, a set of
related manuscripts.  It indicates which manuscripts have been copied
from which other manuscripts (``parents''), and which manuscript is
the original source.
It may include both extant (currently existing and available) and
non-extant (``lost'') manuscripts.  The stemma is not necessarily a
tree: sometimes a manuscript has been copied partially from one
manuscript, and partially from another, in which case the manuscript
has multiple parents.

More formally, a stemma can be defined as a CRDAG, a Connected
Directed Acyclic Graph with a single Root \cite{Andrews-IP}.  A
dataset contains the manuscripts from one tradition. Each manuscript
is described by a fixed set of features $F_1$, \ldots, $F_n$, each of
which has a nominal domain $Dom(F_i)$ (variant readings of feature
$F_i$). Typically, a feature refers to a particular location or
section in a text, though it can also be the spelling of a particular
word, e.g., the dwelling of ``Van den Vos Reynaerde'' can be spelled
as Malpertuis, Malpertus, or Malpertuus.

The 19th century philologist Karl Lachmann was among the first to
apply a principled method for reconstructing stemmata from sets of
manuscripts \cite{Timpanaro:book}.  Nowadays, a variety of methods
exist.  Many are borrowed from biology, where a similar problem,
reconstruction of phylogenetic trees, is well-studied.  However, these
methods do not always fit the stemmatological context well.  First,
they assume that phylogenies are tree-shaped, while stemmata are
DAGs\footnote{Some methods return phylogenetic networks, but these
  represent uncertainty about the real tree, which is different from
  claiming that the network represents the actual phylogeny.}.  Second,
these trees contain only bifurcations, while stemmata can have
multifurcations.  Third, in most methods,
the trees are such that each extant copy is at a leaf of the tree,
whereas in stemmatology one extant copy may be an ancestor of another
(and hence should be an internal node).  Fourth, stemmatologists often
have additional information, for instance about the time or place of
origin of a manuscript, which ideally should be taken into account.
Research continues to develop new algorithms better suited for the
stemmatological context \cite{Baret06}.

\subsection{The task}

Apart from reconstructing stemmata from data, stemmatologists are also
interested in other types of analyses, which may, for instance, use a
known stemma or a manually-constructed best-guess stemma as an input.
These types of analysis can be very diverse.  The data mining tasks we
address in this section belong to this category.

The problem studied here assumes that a CRDAG representing a stemma of
a tradition is given, as well as feature data about the manuscripts
from the tradition.  More specifically, the data include a feature for
each location where variation is observed in the tradition represented
by the stemma. For each extant manuscript in the tradition, the
feature data describe its variant reading; the variant reading is
unknown for the non-extant ones.  For most features, it seems rather
unlikely that the same variant reading originated multiple times
independently; i.e., it is reasonable to assume there is one ancestor
where the variant reading occurred for the first time (the ``source''
of the variant). Therefore, we say that {\em the feature is
  consistent with the stemma} if it is possible to indicate for each
variant a single manuscript that may have been the origin of that
variant.  Since for some manuscripts the value of the feature is not
known, checking consistency boils down to assigning a variant to each
node in the CRDAG in such a way that, for each variant, the nodes
having that variant form a CRDAG themselves. Note that one can imagine
exceptions to the above, e.g., a new spelling of a word can be
independently introduced in different copies.

\subsection{Consistency checking of a stemma is NP-complete}

We learned about this problem through contacts with researchers in
stemmatology. One of them had developed an algorithm 
(implemented with a program of about 370 lines of Perl
using a graph library as a back end) to solve the basic task, did
several iterations to handle yet uncovered cases and was still
worried about the completeness of their approach (does the algorithm
always find a solution when a solution exists?). The algorithm
attempts not to make wrong decisions by initially assigning several
variant readings to the non-extant manuscripts and, in a second phase,
remove variant readings while preserving consistency. Once understood,
the problem was formalized as a graph problem and shown to be
NP-complete by one of the authors of this paper. In this
formalization, the variant reading of a text is represented as a color
and checking a stemma is a color-connected problem.

\begin{definition} [Color-connected]
  Two nodes $x$ and $y$ in a colored CRDAG are {\em color-connected}
  if a node $z$ exists ($z$ can be one of $x$ and $y$) such that there
  is a directed path from $z$ to $x$, and one from $z$ to $y$, and all
  nodes on these paths (including $z$, $x$, $y$) have the same color.

  Given a partially colored CRDAG, the {\em color-connected problem} is to
  complete the coloring such that every pair of nodes of the same
  color is color-connected.
\end{definition}

\begin{figure}
\scalebox{1.0}{
\includegraphics[width=\textwidth]{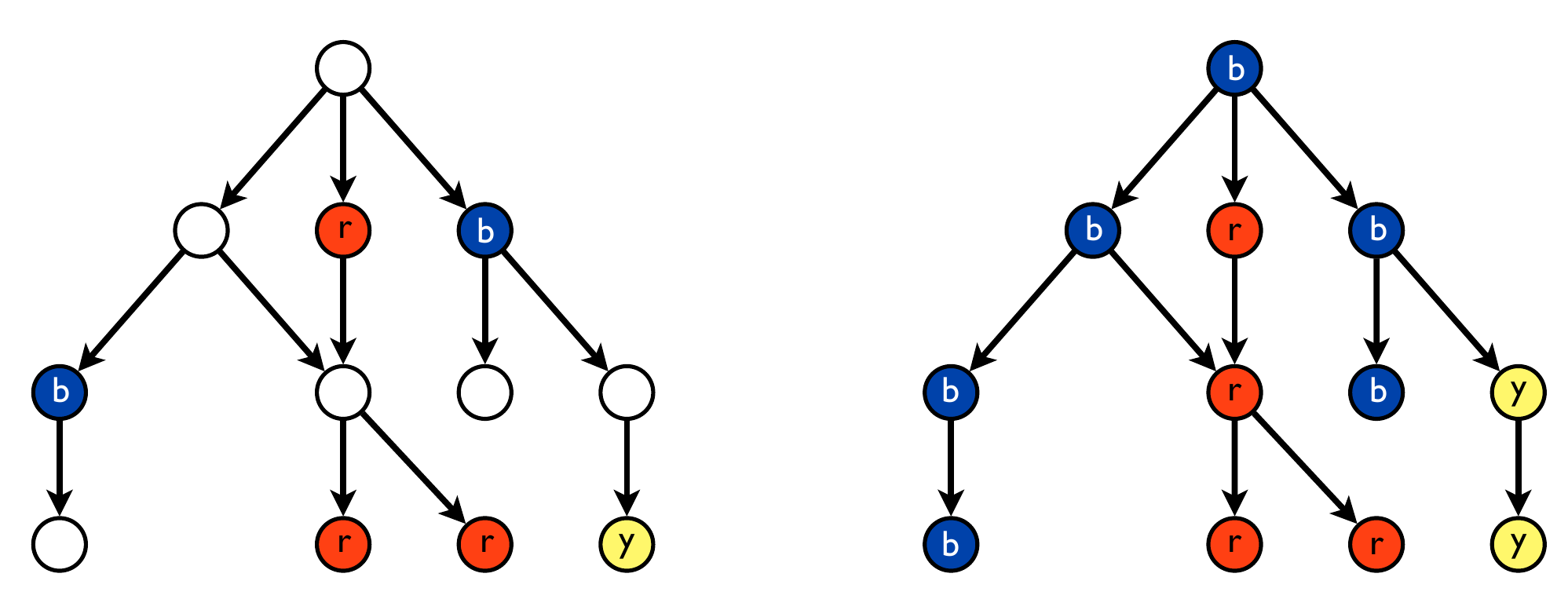}}
\caption{Left: a partial labeling showing for a given feature which
  manuscripts have which variant readings/colors.  Right: a
  complete extension of that labeling where each variant reading is a CRDAG.
  Because such an extension exists, the feature is consistent with
  the stemma.}
\label{fig:colorings}
\end{figure}

An illustration is given in Figure~\ref{fig:colorings}.
A candidate coloring can be checked in polynomial time, hence proving
that the color-connected  problem is NP-hard implies it is NP-complete.

\begin{theorem}
The color-connected  problem is NP-hard.
\end{theorem}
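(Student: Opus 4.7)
The plan is to prove NP-hardness by polynomial reduction from 3-SAT. I would construct a polynomial-time algorithm that, given a 3-CNF formula $\varphi$, produces a partially colored CRDAG $G_\varphi$ that admits a valid completion if and only if $\varphi$ is satisfiable.

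First I would introduce a root $r$ together with two ``truth-value anchor'' nodes $T$ and $F$, pre-colored with distinct colors $c_T$ and $c_F$, so that the colors representing ``true'' and ``false'' can be propagated to other parts of the graph. For each variable $x_i$, I build a \emph{variable gadget}: a small sub-DAG containing an uncolored ``selector'' node $s_i$ together with pre-colored witnesses in colors $c_T$ and $c_F$, arranged so that --- under the requirement that each color class induces a CRDAG on the nodes carrying that color --- $s_i$ must commit to either $c_T$ or $c_F$, while either local choice admits a valid completion. Each literal occurrence in a clause is then associated with a node reachable only from $s_i$ (via a $c_T$-path for positive occurrences, via a $c_F$-path for negated ones), so that the selector's colour encodes the truth value of $x_i$.

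For each clause $C_j = (\ell_{j,1} \vee \ell_{j,2} \vee \ell_{j,3})$ I would add a \emph{clause gadget}: a sub-DAG with a pre-coloured witness $w_j$ of a fresh clause colour $c_j$, placed so that at least one additional node $n_j$ is forced to carry $c_j$. The gadget is wired so that color-connectedness of $n_j$ to $w_j$ can only be routed through a monochromatic $c_j$-path that exists if and only if some attached literal selector has been assigned its satisfying colour, i.e.\ if and only if $C_j$ is satisfied by the encoded assignment. Conversely, any choice satisfying $C_j$ makes one such path available, and the rest of the gadget can be completed in colour $c_j$ along that path.

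The construction is plainly polynomial in $|\varphi|$, and once the gadgets are in place the two directions of the equivalence follow by local case analysis at each variable and clause. The main obstacle I expect is designing the clause gadget so that it is simultaneously \emph{sound} (no completion exists when every literal of $C_j$ is false) and \emph{complete} (a completion exists when some literal is true), without distinct gadgets interfering through unintended common ancestors. To keep the local arguments independent I would insert ``barrier'' pre-coloured nodes of fresh auxiliary colours at the interfaces between gadgets, so that no spurious monochromatic path short-circuits a failing clause by routing through a neighbouring gadget or back through $r$. With the gadgets isolated in this way, correctness decomposes into a finite check per variable and per clause, and NP-hardness follows.
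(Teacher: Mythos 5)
Your proposal is a plan rather than a proof: the entire content of the reduction lives in the variable and clause gadgets, and those are exactly the parts you leave unspecified (you even flag the clause gadget as ``the main obstacle''). Worse, the specific mechanisms you sketch run into concrete trouble in this particular problem. First, nothing in the color-connected problem forces an uncolored selector $s_i$ to commit to $c_T$ or $c_F$: a node may take any color, and a node whose color class is a singleton (or more generally forms a rooted connected subgraph on its own) is automatically color-connected, so your variable gadget as described does not enforce a binary choice. Second, your clause gadget asks for a monochromatic path in a \emph{fresh} color $c_j$ whose existence is supposed to be switched on or off by selectors colored $c_T$ or $c_F$; but every node on a $c_j$-path must itself be colored $c_j$, and an uncolored intermediate node can simply be assigned $c_j$ no matter what the selectors do, unless it is pre-colored or needed for the connectivity of some other color class. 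You give no mechanism for that coupling, so soundness (no completion when all literals are false) is not established. The barrier nodes address interference between gadgets but not this central difficulty.

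The paper's proof avoids gadgets entirely. It first normalizes the CNF so every clause is all-positive or all-negative (replace $\neg x$ by a fresh variable $notx$ and add $x \vee notx$ and $\neg x \vee \neg notx$), and then uses only two colors: a black hub $a$ with edges to all variable nodes and a black sink $a_i$ below the variables of each positive clause, and symmetrically a white hub $b$ and white sinks $b_i$ for the negative clauses, with a black root $r$ above $a$ and $b$. Color-connecting the pre-colored sink $a_i$ to the rest of the black class forces at least one variable of $C^+_i$ to be black, i.e.\ true, and dually for white; the truth-value colors themselves do the work that your fresh clause colors cannot. If you want to salvage your approach, the key repair is to pre-color the clause witnesses with the truth-value colors (not fresh ones) and place them as descendants of their literals' variables, at which point you have essentially rediscovered the paper's construction.
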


\begin{proof}
  The proof is  by showing a polynomial reduction from SAT to
  color-connectedness.

  There exists a polynomial reduction from a CNF formula to one with
  all clauses either positive (all literals are positive) or negative
  (all literals are negative). Indeed, replace all occurrences of a
  negative literals $\neg x$ by a new positive literals $notx$ and
  add, for every such $notx$ literal the clauses $x \vee notx$ and
  $\neg x \vee \neg notx$.

  So, we assume without loss of generality a CNF formula $T$
  consisting of positive clauses $T^+$ and negative clauses $T^-$. We
  construct a color-connected problem whose solutions correspond to
  the models of $T$.
  Let $T^+ = C^+_1 \wedge C^+_2 \wedge \ldots \wedge C^+_m$ and $T^- =
  C^-_1 \wedge C^-_2 \wedge \ldots \wedge C^-_n$ with $C^+_1 , \ldots,
  C^+_m$ positive clauses and $C^-_1, \ldots, C^+_n$ negative clauses.
  Let $V$ be the set of propositional variables in $T$.

  Now we construct a DAG $G$ consisting of the nodes $V(G) = {r} \cup A \cup B \cup
  V$ where $A=\{a, a_1, a_2, \ldots ,a_m\}$ ($a_i$ stands for clause
  $C^+_i$; $a$ is an extra node) and $B=\{b, b_1, b_2, \ldots
  ,b_{n}\}$ ($b_i$ stands for clause $C^-_i$; $b$ is an  extra
  node). The directed edges are given by $E(G) =\{(v,a_i) | i \in
  [1..m], v \in C^+_i\} \cup \{(v,b_i) | i \in [1..n], v \in C^-_j\}
  \cup \{(a,v)| v \in V\} \cup \{(b,v)| v \in V \cup \{(r,a),(r,b)\} $
  .  Next we color $r$, $a$ and all nodes $a_i$ black and $b$ and all
  nodes $b_i$ white. We obtain 
  a partially colored CRDAG (See Figure~\ref{example-CRDAG} for an
  example). Moreover, a solution to the color-connected problem
  encodes a solution to the original SAT problem. Indeed, each $a_i$
  node, representing a positive clause, is connected with at least
  one black variable. Hence, making all black variables true satisfies
  all positive clauses. Also, each $b_i$ node, representing a
  negative clause, is connected with at least one white
  variable. Hence making the white variables false satisfies all
  negative clauses. It follows that the color-connected problem is
  NP-hard.
\end{proof}

\begin{figure}
\scalebox{0.8}{
\includegraphics[width=\textwidth]{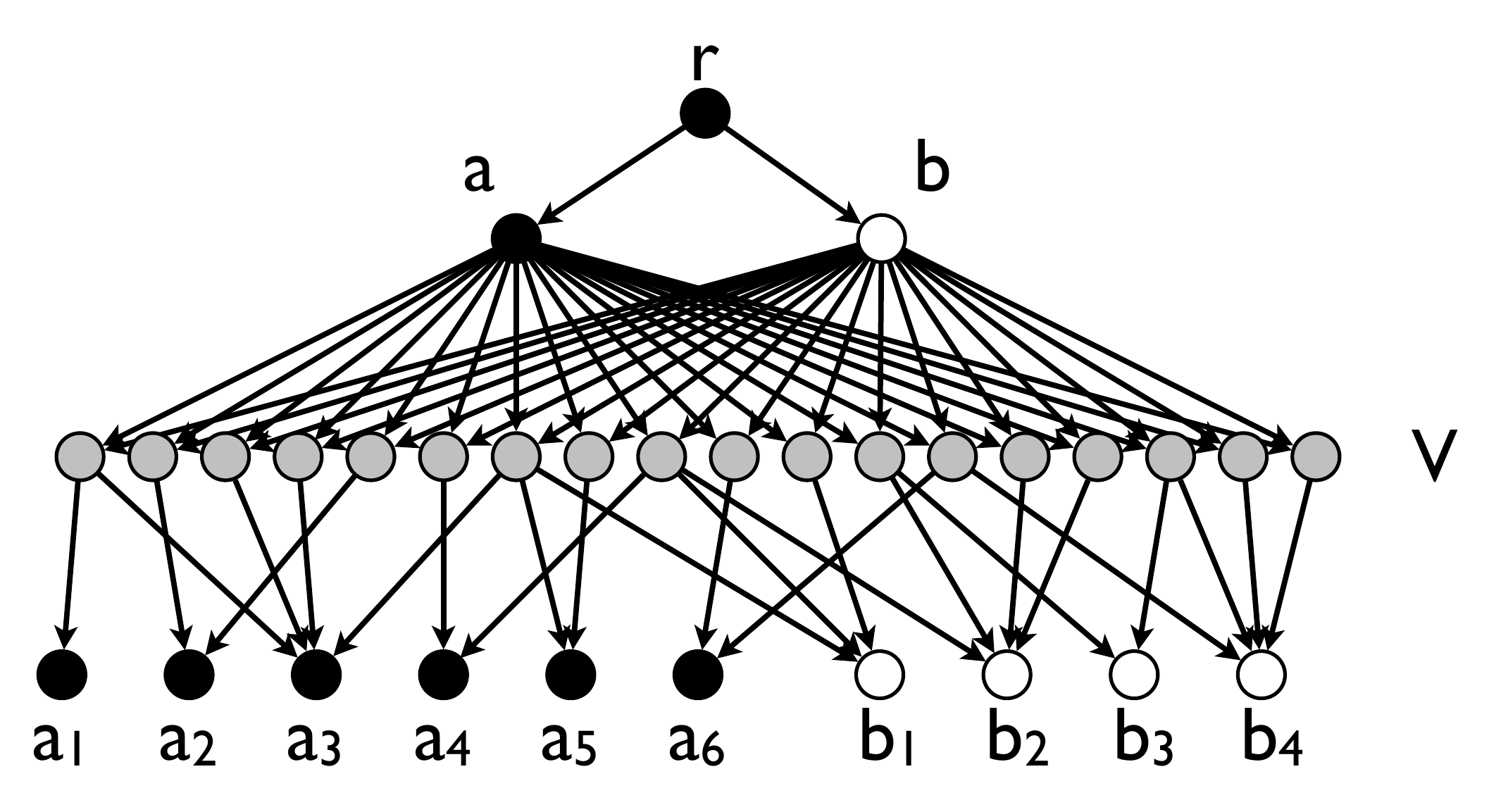}
}
\caption{A partially colored CRDAG constructed from a CNF theory
  $T$. The black $a_i$ nodes represent the positive clauses, the white
  $b_i$ nodes the negative clauses. The grey nodes represent the
  propositional variables; they are linked to the clauses in which
  they participate and have to be colored.}
\label{example-CRDAG}
\end{figure}

The problem being NP-complete, it is unlikely it can be solved by a
procedural program without search. The proof suggests the problem
becomes hard when nodes can have multiple parents. This situation is
not dealt with (and therefore usually abstracted away) in
traditional stemmatological methods. However, it does occur in the
datasets we analyzed. Constructing small examples where several nodes
have multiple parents, we quickly obtained an example for which the
procedural code erroneously claimed no connected coloring exists.  So
the worries of the developer about the completeness of the code were
grounded.



\subsection{An \fodotidp Solution}

A first \fodotidp solution used a binary relation \texttt{SameVariant}
for representing that two manuscripts have the same variant reading
and imposed two constraints: (i) transitivity of \texttt{SameVariant}
relation, (ii) manuscripts with the same variant reading have a common
ancestor with that variant reading and are connected to that ancestor
through manuscripts with that same variant reading. This resulted in a
working version that could serve as a golden standard for the
procedural code but was much slower than the latter.

As we already noticed in the shortest path problem, modeling
transitive closures results in large grounding sizes and
runtime. Hence, a major improvement can be expected when that can be
avoided. Representing the variant reading as a function from
manuscripts to variants allowed us to drop the transitivity
constraint.  The final improvement, resulting in the program below,
came from learning more about the procedural code: it checks for
connectedness by following a path to the original source manuscript of
the variant reading and checks that there is a single such source for
the variant reading. Expressing the latter as a single constraint
resulted in a version that turned out to be faster than the incomplete
procedural algorithm. The \idpdrie model is shown in
Listing~\ref{listing:idp-stemma} and explained below. 
We also show most of the procedural
code, so that the reader can see how a number of satisfiability-checking tasks
can be embedded in a single process.

\begin{lstlisting}[caption={Checking the consistency
    between stemma and features.},label={listing:idp-stemma}]
procedure main() {
  process("besoin")
  process("parzival")
  process("florilegium")
  process("sermon158")
  process("heinrichi")
}

/* ----- Knowledge base -------------------------------------- */
vocabulary V {
  type Manuscript
  type Variant
  CopiedBy(Manuscript,Manuscript)
  VariantReading(Manuscript): Variant
}
vocabulary Vtask {
  extern vocabulary V
  SourceOf(Variant): Manuscript
}
theory Ttask : Vtask {
  ! x : (x ~= SourceOf(VariantReading(x))) =>
     ? y : CopiedBy(y,x) & VariantReading(y) = VariantReading(x).
}

/* ----- Check consistency between feature and stemma -------- */
procedure check(feature) {
  setvocabulary(feature,Vtask)
  return sat(Ttask,feature)
}

/* ----- Procedures for processing --------------------------- */
procedure process(tradition) {
  io.write("Processing ",tradition,".\n")
  local path = "data/"
  local stemmafilename = path..tradition..".dot"
  local featurefilename = path..tradition..".json"
  processFiles(stemmafilename,featurefilename)
}
procedure processFiles(stemmafilename,featurefilename) {
  local stemma,nbnodes,nbedges = readStemma(stemmafilename)
  io.write("Stemma has ",nbnodes," nodes and
                       ",nbedges, " edges.\n")
  local nbp,nbs,time = processFeatures(stemma,featurefilename)
  io.write("Found ",nbp," positive out of ",nbs," groupings ")
  io.write("in ",time," sec.\n")
}
procedure readStemma(stemmafilename) {
  /* 19 lines of lua code */
}
procedure processFeatures(stemma,featurefilename) {
  /* 23 lines of lua code
    a loop iterating over the features, 
    -- compute feature as stemma extended with
        the feature specific data
    -- call check(feature)
    -- process the results
    finally, return the overall results */
}
\end{lstlisting}

The logical model is described in the ``Knowledge base'' section of the code.
The vocabulary has been split in two parts. The vocabulary
\texttt{V} is used to represent the input data: the stemma and the
feature. It introduces the types \texttt{Manuscript} and
\texttt{Variant}, the binary relation \texttt{CopiedBy} representing
the parent-child pairs in the given structure of the stemma and the
function \texttt{VariantReading} representing the known data about
variant readings of manuscripts.  The vocabulary \texttt{Vtask}
extends \texttt{V} with the task specific vocabulary. Only one extra
function is needed namely \texttt{sourceOf} which maps a variant
reading to the manuscript that is the source of that variant reading.
The theory \texttt{Ttask} consists of a single constraint; it states
that a manuscript that is not the source of its own variant reading
must have a parent with the same variant reading.

The remainder is procedural code. The procedure \texttt{main} iterates over all traditions to be analyzed and calls
the procedure \texttt{process} for each of those.
The latter procedure uses concatenation to construct two filenames
from the name of the tradition and passes these file names to the
\texttt{processFiles} procedure; the ``.dot'' file contains the stemma
data; the ``.json'' file the feature data. The \texttt{readStemma}
procedure (code omitted) returns the input structure describing the
stemma as well as the number of manuscripts (nodes) and parent-child
pairs (edges). The \texttt{processFeatures} procedure (code omitted)
iterates over the features in the file. For each feature, it
constructs a feature structure by extending the stemma structure with
the feature specific data. It then calls the \texttt{check}
procedure. This procedure extends the feature structure with the
symbols from the \texttt{Vtask} vocabulary
(\texttt{setvocabulary(feature,Vtask)}) and then checks the
color-connectedness of the feature (\texttt{sat(Ttask,feature)}).  The
yes/no result is returned to the \texttt{processFiles} procedure which
collects and returns the global results: number of consistent
(positive) features, total number of features and time. The
\texttt{processFiles} procedure prints these global data and returns
to \texttt{main}.

As can be seen in the {\tt main()} procedure, we used the code to
perform consistency checking for the features of 5 traditions; two of
them, Sermon 158 and Florilegium are real traditions, with stemmata
that have been constructed according to current philological best
practice; the other three are artificial traditions, produced under
test conditions by volunteers for the purposes of empirical research
into stemmatological methods.  We received the data from Tara
Andrews. A website where such stemma data can be found is
\url{http://byzantini.st/stemmaweb/}.  Some information about the
stemma we used is given in Table~\ref{tab:data}.

\begin{table}
\caption{The five traditions used in this work.}
\label{tab:data}
\centering{
\begin{tabular}{|l|l|l|l|l|l|}
\hline
Name &\# manu- & \# parent-child &\# features & \multicolumn{2}{c}{\# variant readings} \\
     &  scripts &pairs& & maximum & average \\
\hline
Notre Besoin & 13 & 13 & 44& 5  & 2,18 \\
Parzival & 21 & 20 & 122&  6 & 2,59 \\
Florilegium & 22 & 21 & 547& 5 & 2,19 \\
Sermon 158 & 34 & 33 & 270& 3  & 2,12 \\
Heinrichi & 48 & 51 & 1042& 17 & 4,84 \\
\hline
\end{tabular}
}
\end{table}

The \idp program determines consistency for all features and datasets
in a matter of seconds\footnote{Using an Intel$^R$ Core$^{TM}$2 Duo
  CPU at 3.00GHz with 3.7 GB of RAM running Ubuntu with the \idpdrie
  options stdoptions.groundwithbounds = false (disabling bounded
  grounding) and stdoptions.liftedunitpropagation = false (disabling
  lifted unit propagation).}:

\begin{footnotesize}
\begin{verbatim}
> main()
Processing besoin.
Stemma has 13 nodes and 13 edges.
Found 26 positive out of 44 groupings in 0 sec.
Processing parzival.
Stemma has 21 nodes and 20 edges.
Found 45 positive out of 122 groupings in 1 sec.
Processing florilegium.
Stemma has 22 nodes and 21 edges.
Found 431 positive out of 547 groupings in 2 sec.
Processing sermon158.
Stemma has 34 nodes and 33 edges.
Found 64 positive out of 270 groupings in 2 sec.
Processing heinrichi.
Stemma has 48 nodes and 51 edges.
Found 1 positive out of 1042 groupings in 12 sec.
>
\end{verbatim}
\end{footnotesize}

Our largest benchmark is the  heinrichi
data set \cite{RoosH/09}. This stemma about old Finnish texts includes 48
manuscripts, 51 \texttt{copiedBy} tuples and information about 1042
features. Processing all features takes 12 seconds with the IDP system while
it took 25 seconds with the original procedural code.

One can observe that rather few features are consistent
with the stemma. This raises the question what is the minimal number
of sources needed to explain the data. To solve that inference task, it suffices to
replace the vocabulary extension \texttt{Vtask} and the theory
\texttt{Ttask} in the knowledge base and to introduce the term to
be minimized. As core procedure, \texttt{Check} is replaced by
\texttt{minSources} and the processing of results has to be
adjusted. The most relevant new parts are shown in
Listing~\ref{listing:ms}. The \texttt{IsSource} predicate is
defined as manuscripts that do not have a parent with the same variant
reading.

\begin{lstlisting}[caption={Minimize the number of sources.},label={listing:ms}]
/* ----- new parts of Knowledge base ------------------------- */
vocabulary Vms {
  extern vocabulary V
  IsSource(Manuscript)
}
theory Tms : Vms {
  {! x : IsSource(x) <- ~? y : CopiedBy(y,x) &
                        VariantReading(y) = VariantReading(x).}
}
term NbOfSources : Vms {
  #{ x : IsSource(x) }
}

/* ----- the core procedure ---------------------------------- */
procedure minSources(feature) {
  setvocabulary(feature,Vms)
  return minimize(Tms,feature,NbOfSources)[1]
}
\end{lstlisting}

Although this is a minimization problem, processing the traditions is
still a matter of seconds, except for the larger Heinrichi dataset
which now requires about 5 minutes to process its 1042 features. 

Other variations are of interest to the researchers. One variation, mentioned by
\citeN{Andrewsetal12}, considers the possibility that the
scribe has copied from an older ancestor than the direct parent, thus
reintroducing a variant. Playing with the relative penalty
of introducing a new variant versus reverting to an older variant, one
can obtain various explanations of interest to the stemmatologist. All
these can be achieved with modifying a handful of lines in the
model. Interesting about the above variant is that it uses a predicate
\texttt{IndirectAncestor} that is defined in terms of the stemma data,
so it can be computed once and reused when processing each of the
features. As illustrated in
Listing~\ref{listing:revert}, the tight integration of the knowledge base with the
procedural code makes this very easy\footnote{With a more recent version of
  \idpdrie, the user can leave this optimization to the system \cite{tplp/JansenJJ13}.}.
The procedure \texttt{readStemma}, which
constructs the stemma structure from the inputfile, is extended with the
call \texttt{modelexpand(T,stemma)[1]}. The resulting model is the
stemma structure extended with the true \texttt{IndirectAncestor}
atoms. This structure, together with the other outputs of
\texttt{readStemma}, is returned to the procedure
\texttt{processFiles} which uses it to handle the features one by one.

\begin{lstlisting}[caption={Materializing a definition once and using
    the materialization many times.},label={listing:revert}]
vocabulary V {
  /* ... as in Listing 5 ... */
  IndirectAncestor(Manuscript,Manuscript)
}

theory T : V {
  {! x y : IndirectAncestor(x,y) <-
            ? z : CopiedBy(x,z) & IndirectAncestor(z,y).
   ! x y : IndirectAncestor(x,y) <-
            ? z : CopiedBy(x,z) & CopiedBy(z,y).}
}

procedure readStemma(stemmafilename) {
  local stemma = newstructure(V,"stemma")
  /* ... reading the stemma data ... */
  return modelexpand(T,stemma)[1], #nodes, #edges
}
\end{lstlisting}

\section{Minimum common supergraphs of partially labeled trees}\label{sec:CS}

\emph{Phylogenetic trees}, extensively surveyed by
\citeN{felsenstein-inferring}, are the traditional tool for
representing the evolution of a given set of species.  However, there
exist situations in which a tree representation is inadequate. One
reason is the presence of evolutionary events that cannot be displayed
by a tree: genes may be duplicated, transferred or lost, and
recombination events (i.e., the breaking of a DNA strand followed by
its reinsertion into a different DNA molecule) as well as
hybridization events (i.e., the combination of genetic material from
several species) are known to occur. A second reason is that even when
evolution is indeed tree-like, there are cases in which a relatively
large number of tree topologies are ``equally good'' according to the
chosen criterion, and that not enough information is available to
discriminate between those trees. One solution that has been proposed
to address the latter issue is the use of \emph{consensus trees},
where the idea is to find a tree that represents a compromise between
the given topologies. Another approach, the focus of this section,
consists in building a network that is compatible with all topologies
of interest. A somewhat loose description of the variant we are
interested in, which will be stated in a more formal way below, is to
find the smallest graph that contains a given set of evolutionary
trees.  For more information about \emph{phylogenetic networks}, see
the recent book by \citeN{huson-networks-book} and the online,
up-to-date annotated bibliography maintained by \citeN{gambette-who}.

\subsection{The problem}

The studied problem is about the evolution of a fixed set of $m$
species. The input is a set of phylogenetic trees, each tree showing a
plausible relationship between the $m$ species.  All trees have $n$
($>m$) nodes, $m$ of them are labeled with the name of the species
(typically, in the leaves, but also internal nodes can be
labeled). Given $n-m$ extra names, the labeling of each tree can be
extended into a full labeling. Now, we can consider the union of these
full labelings: a network with $m$ labeled nodes and edges which are
induced by the bijections between the fully labeled trees and the
network. Obviously, the number of edges of the network depends on the
chosen full labelings of the trees.  The task is to find a network
with a minimum number of edges.  Below, we formulate the problem as a
slightly more general graph problem where we do not fix the size of
the initial labeling.

\begin{definition}[Common supergraph of partially labeled $n$-graphs]
  Given is a set $N$ of $n$ names and a set of graphs
  $\{G_1,G_2,\ldots,G_t\}$ where each graph $G_i= (V, E_i, \mathcal
  L_i)$ has $n$ vertices (the set $V$), edges connecting pairs of
  vertices (the set $E_i$) and where some of the vertices are labeled by
  names (an injective partial function $\mathcal L_i: V \rightarrow
  N$).
  A graph $(N,EN)$ is a {\em common supergraph} of
  $\{G_1,G_2,\ldots,G_t\}$ if there exists, for each $i$, a bijection
  $\mathcal L'_i: V \rightarrow N$ that extends $\mathcal L_i$ and
  such that 
  $\{v,w\} \in EN$ iff there exists an $i$ such that $\{v',w'\} \in
  E_i$ and $\{v,w\} =\{(\mathcal L'_i(v'),  \mathcal L'_i(w')) \}$.

  A common supergraph $(N,EN)$ is a {\em minimum} common supergraph if
  no other common supergraph $(N,EN')$ exists for which $|EN'| < |EN|$.
\end{definition}

Note that every labeling function $\mathcal L'_i$ induces an injection
$E_i\to EN$, hence the name common supergraph.
Figure~\ref{fig:n-k-trees-mcs-example} shows two partially labeled
$7$-graphs, along with two of their common supergraphs. $G_2$ is not a
minimum common supergraph since it has more edges than $G_1$; $G_1$ is
a minimum common supergraph since $T_1$ and $T_2$ are not
isomorphic and $G_1$ has only one more edge than each of $T_1$ and
$T_2$.

\begin{figure*}[htbp]
\begin{center}
\scalebox{0.65}{
\begin{tabular}{cccc}
\begin{tikzpicture}[>=stealth]
\tikzstyle{every node}=[draw,circle,inner sep=2pt];
    \node (1) at (0,-1) {$1$};
    \node (v1) at (1,-1)  {};
    \node (v2) at (2,-1)  {};
    \node (3) at (3,-1) {$3$};
    \node (4) at (1,-2)  {$4$};
    \node (v3) at (1,-3)  {};
    \node (2) at (1,-4)  {$2$};
    \draw  (1) -- (v1) -- (v2) -- (3);
    \draw (v1) --  (4) -- (v3) -- (2);
\end{tikzpicture}
&
\begin{tikzpicture}[>=stealth]
\tikzstyle{every node}=[draw,circle,inner sep=2pt];
    \node (1) at (0,-1) {$1$};
    \node (v1) at (1,-1)  {};
    \node (v2) at (2,-1)  {};
    \node (3) at (3,-1) {$3$};
    \node (4) at (2,-2)  {$4$};
    \node (v3) at (2,-3)  {};
    \node (2) at (2,-4)  {$2$};
    \draw  (1) -- (v1) -- (v2) -- (3);
    \draw (v2) --  (4) -- (v3) -- (2);
\end{tikzpicture}
&
\begin{tikzpicture}[>=stealth]
\tikzstyle{every node}=[draw,circle,inner sep=2pt];
    \node (1) at (0,-1) {$1$};
    \node (v1) at (1,-1)  {};
    \node (v2) at (2,-1)  {};
    \node (3) at (3,-1) {$3$};
    \node (4) at (1.5,-2)  {$4$};
    \node (v3) at (1.5,-3)  {};
    \node (2) at (1.5,-4)  {$2$};
    \draw  (1) -- (v1) -- (v2) -- (3);
    \draw (v1) --  (4) -- (v3) -- (2);
    \draw (v2) -- (4);
\end{tikzpicture}
&
\begin{tikzpicture}[>=stealth]
\tikzstyle{every node}=[draw,circle,inner sep=2pt];
    \node (1) at (-1,-1) {$1$};
    \node (v1) at (0,-1)  {};
    \node (v2) at (0,-3)  {};
    \node (3) at (0,-4) {$3$};
    \node (4) at (0,-2)  {$4$};
    \node (v3) at (-1,-2)  {};
    \node (2) at (1,-2)  {$2$};
    \draw (1) -- (v1)  -- (2) -- (v2) -- (v3) -- (3);
    \draw (v1) -- (4) -- (v2);
    \draw (1) -- (v2);
    \draw (v1) -- (v3) -- (4);
\end{tikzpicture}
 \\
      &       &       &       \\
$T_1\quad\quad\quad$ & $\quad\quad\quad T_2$ & $G_1$ & $G_2\quad\quad$
\end{tabular}
}
\end{center}
\caption{Two $7$-graphs, $T_1$ and $T_2$, and two of their common supergraphs.
  $G_1$ is a minimum common supergraph.}
\label{fig:n-k-trees-mcs-example}
\end{figure*}
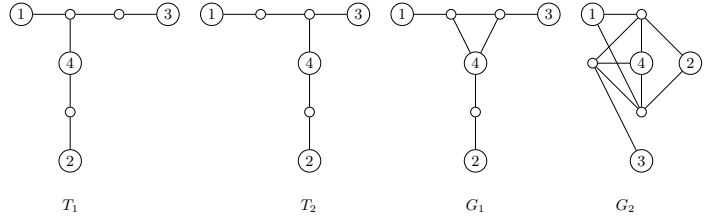

Now, we can consider the following decision problem: Given a set of
partially labeled $n$-graphs, can the
labelings 
be completed such that
the $n$-graphs have a common supergraph with at most $k$ edges?
\citeN{ump} prove that this problem is NP-hard, even if the $n$-graphs
are trees with all leaves labeled.

\subsection{An \fodotidp solution}

Listing~\ref{fig:idp-model-adv} shows a simple model inspired by
\cite{ump}. It makes use of three types, \texttt{tree},
\texttt{vertex}, and \texttt{name}. The latter two types have the same
number of elements in a correct input structure. The structure of the
given trees is described by the ternary predicate \texttt{edge} (the
first argument refers to the tree to which the edge belongs), the structure of
the common supergraph (over the names themselves) by the predicate
\texttt{arc}. The labeling is described by the function \texttt{label}
from the nodes of the given trees to the names. It is partially given
in the input structure and it is completed during model expansion. The
constraint in the theory, stating that, for each name \texttt{nm} and
each tree \texttt{t}, there exists {\em exactly one} node \texttt{nd}
(denoted \idpcode{?  1 nd}) such that its label is \texttt{nm},
ensures that the labeling is bijective. The \texttt{arc} atoms can be
defined as the pairs of names induced by the labels on the nodes of an
edge of the tree (the definition in the theory). However, as the
minimization is on the number of \texttt{arc} atoms in a model, some
care is required. One should ensure either that \texttt{arc} is a
symmetric relation or that there is at most one \texttt{arc} atom for
each pair of names. The latter approach is taken as it gives a
somewhat smaller grounding. It is achieved by exploiting the total
order which exists over each domain (the tests \idpcode{label(t,x) <
  label(t,y)}).

\begin{lstlisting}[caption={Modeling {\sc cs-plt} in \fodotidp.},label={fig:idp-model-adv}]
vocabulary CsPltVoc {
  type tree
  type vertex
  type name // Isomorphic to vertex
  edge(tree,node,node)  // trees, given in input structure
  arc(name,name) // the induced network
  label(tree,node): name // the labeling,
                    // partially given in the input structure
}
theory CsPltTheory : CsPltVoc {
  { // induced network; arc is anti-symmetric
   ! t x y : arc(label(t,x),label(t,y)) <- edge(t,x,y) &
                                  label(t,x) < label(t,y).
   ! t x y : arc(label(t,x),label(t,y)) <- edge(t,y,x) &
                                  label(t,x) < label(t,y).
  }
  ! t nm : ?1 nd : label(t,nd) = nm. // label is bijective
}
term SizeOfSupergraph : CsPltVoc { #{ x y : arc(x,y) } }
procedure main() {
  print(minimize(CsPltTheory,CsPltStructure,SizeOfSupergraph)[1])
}
\end{lstlisting}

The rules of the \texttt{arc} definition in this solution each have
have two occurrences of the terms \idpcode{label(t,x)} and
\idpcode{label(t,y)}. The current grounder naively associates a
distinct symbol with each occurrence, which boils down to grounding a
clause of the following form:
\begin{lstlisting}[caption={One of the  \texttt{arc} rules after
    initial processing by the grounder.},label={fig:expandedarc}]
  ! t x y lx1 lx2 ly1 ly2 : arc(lx1,ly1) <- lx1=label(t,x) &
        ly1=label(t,y) & lx2=label(t,x) & ly2=label(t,y) &
        edge(t,x,y) & lx2 < ly2.
\end{lstlisting}

This approach creates extra variables and very large groundings. To
avoid this behavior, on can rewrite the definition as:
\begin{lstlisting}[caption={Better performing definition of
    \texttt{arc}.},label={fig:betterarc}]
{ // induced network
 ! t x y lx ly : arc(lx,ly) <- lx=label(t,x) & ly=label(t,y) &
                                  edge(t,x,y) & lx < ly.
 ! t x y lx ly : arc(lx,ly) <- lx=label(t,x) & ly=label(t,y) &
                                  edge(t,y,x) & lx < ly.
}
\end{lstlisting}
While the formulation is less elegant, the effect on the size of the
grounding and the solving time is dramatic; e.g., the grounding is
reduced from 620798 to 6024 propositional clauses and the solving
time from 144s to 8s on a problem with 5 trees of 8 vertices and 4
initial labels.

One can explore several other variations. As mentioned above, one
could use a symmetric \texttt{arc} relation. Also, as the \texttt{arc}
definition is free of recursion, one could replace it with the two
implications of the completion. Then, exploiting the minimization on
the number of \texttt{arc} atoms, one could drop the only-if part of
the completion.  The effect on solving time of all these variations is
rather marginal.

\subsection{An approximate solution}

The solving time is exponential in the number of nodes and, if several
trees are involved, the program becomes impractical on real-world
problems, even if the best solution found so far is returned when some
time budget is exceeded. However, the versatility of the IDP system
allowed us to experiment with various strategies for greedily
searching an approximate solution.  This led to the following quite
natural solution that performed very well, with respect to both
running time and quality of the solution.
\begin{enumerate}
\item Find a minimum common supergraph (MCS) for every pair of trees.
\item Pick an MCS with minimum size (say $G$) and remove the two trees
   that are the input for $G$.
\item Find an MCS between $G$ and every remaining tree.
\item Replace $G$\footnote{This way, the MCS is assembled by each time incorporating one additional original tree.} by an MCS with minimum size, remove the tree
  that is the input for this MCS and go back to step 3 if any tree remains.
\end{enumerate}

Steps 1 and 3 of this simple procedure are performed by \idpdrie using a
model very similar to that of Listing~\ref{fig:idp-model-adv} (see
\citeN{ump} for the actual model)\footnote{The whole method can be
  implemented as an \idpdrie procedure; however, the scripts had been
  implemented before the Lua interface was available.}. This greedy
approach works very well. Indeed, for large instances and a fixed time
budget, the exact method runs out of time and returns a suboptimal
solution while the greedy method completes and returns a solution
that, although suboptimal, is typically much
smaller. \Cref{tab:ump-results} shows some experimental results on
randomly generated data with various parameters. A timeout was set to
2000s\footnote{Using an Intel$^R$ Core$^{TM}$ i7 CPU 870 at 2.93GHz
   with 8GB of RAM running Ubuntu; default settings for \idpdrie.}
and average number of edges were recorded over four runs for each
instance for both the exact and the greedy method.

\begin{table}[tb]
 \scalebox{0.85}{
\begin{tabular}{|ccc|ll|}
          &         & \#initial  &  exact & greedy\\
  \#trees & \#nodes & labels & \# edges & \# edges\\ \hline
  5  & 55   &  5&\bf{130} & 131.25\\
  5  & 60   & 10& \bf{128} & 132.75\\
  5  & 75   & 25 &207.75 * & \bf{184.75}\\
 10 &  55   &  5 & 183.75 * & \bf{154.50}\\
 10 &  60  &  10 & 177.75 * & \bf{154.75}\\
 10  & 75   & 25 & 270.00 * & \bf{269.25}\\
 20  & 55   &  5 & 241.50 * & \bf{171.75}\\
 20  & 60  &  10 & 232.00 * & \bf{152.25}\\
 20  & 75 &   25 & 346.25 * & \bf{279.00}\\
\end{tabular}
}
\caption{Randomly generated instances of the minimum common subgraph
  problem solved with a time bound of 2000s. Sizes of MCS (average
  over four runs) for exact and greedy approach. *: approximate
  solution due to time out.}
\label{tab:ump-results}
\end{table}

\section{Learning deterministic finite state automata}\label{sec:dfa}

A third task is about learning a {\em deterministic finite state
  automaton} (DFA). The goal is to find a (non-unique) smallest DFA
that is consistent with a given set of positive and negative
examples. It is one of the best studied problems in grammatical
inference \cite{pr/Higuera05}, has many application areas, and is
known to be NP-complete~\cite{iandc/Gold78}.
Interestingly, one of the first algorithms proposed to solve this
problem was based on a translation to constraint programming
~\cite{biermannfeldman72}. Much later, translations of this problem to
graph coloring~\cite{CosteN97,costaverwer12} and
satisfiability~\cite{grinchtein06,Verwer} were
proposed. Although the DFA learning problem is typically tackled using
greedy approaches \cite{pr/Higuera05},~\citeN{Verwer12} recently won the
2010 Stamina DFA learning competition~\cite{stamina} by an improved
translation to a SAT problem and running an off-the-shelf SAT
solver. Here we explore to what extent an \fodotidp formalization can
compete with this competition winner.

\subsection{The problem}

A \emph{deterministic finite state automaton} (DFA) is a directed
graph consisting of a set of \emph{states} $Q$ (nodes) and labeled
\emph{transitions} $T$ (directed edges). The root is the start state
and any state is either an \emph{accepting} or a \emph{rejecting}
state. In each state, there is exactly one transition for each
symbol. A DFA defines a language, the set of strings it accepts. It
can be used to \emph{generate} or \emph{verify}
sequences of symbols (strings) using a process called \emph{DFA
  computation}. When verifying strings, the symbols of the input string
determine a path through the graph. When the final state is an
accepting state, the string is accepted, otherwise it is rejected.

Given a pair of finite sets of positive example strings $S^{+}$ and
negative example strings $S^{-}$, (the \emph{input sample}), the goal
of \emph{DFA identification} (or \emph{learning}) is to find a
(non-unique) \emph{smallest} DFA $\mathit{A}$ that is
\emph{consistent} with $S = \{S^+, S^-\}$, i.e., every string in $S^+$
is accepted, and every string in $S^-$ is rejected by
$\mathit{A}$. Typically, the size of a DFA is measured by $|Q|$, the
number of states it contains.

Most DFA learning algorithms are based on the method of state-merging.
This method first constructs a tree-shaped automaton called the
\emph{augmented prefix tree acceptor} (APTA).  As can be seen in
Figure~\ref{fig:apta}, the APTA accepts the positive examples and
rejects the negative ones. Other strings either end up in a non-final
state or cannot be processed due to a missing transition. 
The APTA automaton can be \emph{completed} to obtain a DFA with the same
number of states by (arbitrarily) labeling the non-final states and
adding the missing transitions. When all non-final states are labeled as
reject and all extra transitions target a reject state with no path to 
an accepting state, this DFA accepts only the positive examples.

A smaller DFA, accepting more strings, can be constructed by
state-merging on the APTA.  Merging states under the constraints that
the automaton remains deterministic (at most one transition/label in
each state) and that accepting and rejecting states cannot be merged
preserves consistency with the input sample. State-merging increases
the number of strings accepted by the automaton, and hence generalizes
the language accepted by the DFA that completes the automaton.

\begin{figure}[t]
\begin{center}
\scalebox{0.60}{
\includegraphics{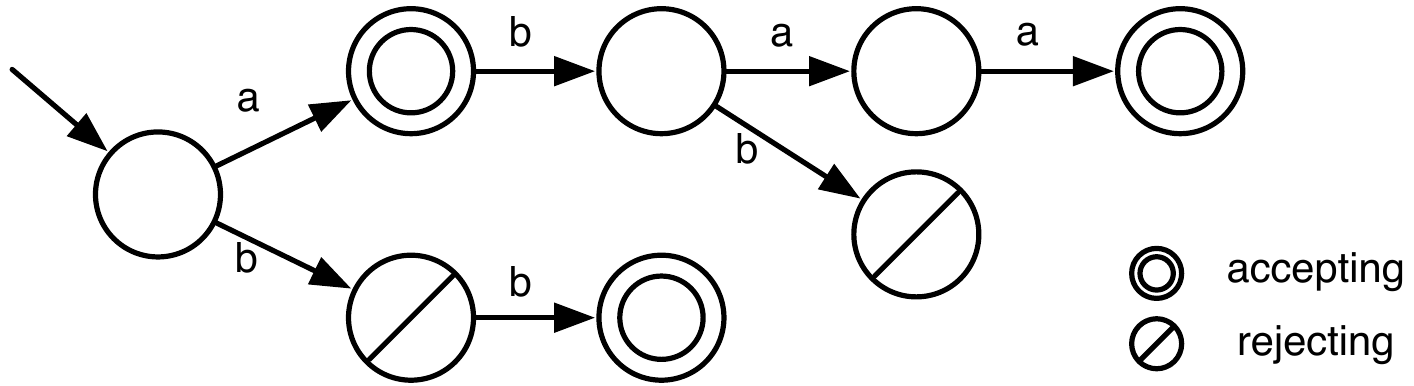}
}
\caption{An augmented prefix tree acceptor (APTA) for $S = (S^+ = \{a,
  abaa, bb\}, S^- = \{abb,b\})$. The start state (annotated with
  incoming arrow) is the root of the APTA.  }
\label{fig:apta}
\end{center}
\end{figure}

States of the final automaton are thus equivalence classes of states
of the APTA.  Calling the states of the final automaton colors, the
problem becomes that of finding a coloring of the states of the APTA
that is consistent with the input sample. Following \citeN{CosteN97},
\citeN{Verwer} take this approach. They formulate constraints
expressing which pairs of states are incompatible, and abstract the
problem as a graph.  The nodes of this graph are the states of the
APTA and the edges are the incompatible pairs. The decision problem,
whether there exists an automaton with $k$ states, becomes a graph
coloring problem for $k$ colors. They use a clever SAT encoding to
solve this decision problem and embed it in a workflow to solve the
minimization problem.  For really large problems, the SAT formulation
becomes too big (hundreds of colors, resulting in over 100 million
clauses) to be handled by a SAT solver \cite{Verwer}. To reduce the
problem size, they used a greedy heuristic procedural method based on
state-merging. Every merge performed by this method reduces the size
of the APTA and therefore also the size of the encoding. In addition,
this preprocessing identifies a clique of pairwise incompatible states
in the APTA. For states in this clique, the colors can be fixed in
advance. The effect is to break the symmetries between these colors
and thus to further reduce the size of the problem.
The preprocessing also deduces that certain
state/color combinations cannot result in a solution.  A preprocessed
problem instance is then extended with a set of SAT clauses and is the
input for the SAT solver. The SAT clauses express the constraints of
the problem and are generated from the instance.

\subsection{An \fodotidp solution}

Our goal is not to set up the complete workflow described above, but
to compare the performance of the native SAT encoding of
\citeN{Verwer} and \citeN{Verwer12} with the performance of an
\fodotidp model on the same problem instances as obtained after the
preprocessing. Our \fodotidp model for solving a single instance is
shown in Listing~\ref{fig:idp-model-dfa}.

\begin{lstlisting}[caption={Modeling DFA in \fodotidp.},label={fig:idp-model-dfa}]
vocabulary dfaVoc {
  type state // states used in APTA
  type label // symbols triggering transitions
  type color // available states for resulting automaton
  partial trans(state,label): state // transitions of APTA
  acc(state) // accepting states of APTA
  rej(state) // rejecting states of APTA
  colorOf(state): color // fixed in input for colors in clique
  // the resulting automaton:
  partial colorTrans(color,label): color // transitions of DFA
  accColor(color) // accepting states
}
theory dfaTheory : dfaVoc {
  ! x : acc(x) => accColor(colorOf(x)).
  ! x : rej(x) => ~accColor(colorOf(x)).
  // trans induces colorTrans:
  ! x l z: trans(x,l)=z => 
       ! i j : colorOf(x)=i & colorOf(z)=j => colorTrans(i,l)=j.
}
procedure main() {
  print(modelexpand(dfaTheory,instance)[1])
}
\end{lstlisting}

The types \texttt{state}, \texttt{label}, the function \texttt{trans},
and the predicates \texttt{acc} and \texttt{rej} describe the given
input samples (and hence the APTA). Note that \texttt{trans} is
partial as it is only defined for the transitions present in the input
sample.  The states of the resulting automaton are elements of the
type \texttt{color}.  Its transitions are described by the function
\texttt{colorTrans}. This function is also declared as a partial
function. To obtain a complete DFA, the function \texttt{colorTrans}
has to be extended with the missing transitions. Which one is assigned
does not matter as it does not affect the processing of the strings in
the input sample (though it has an effect on the language that is
accepted).  The function \texttt{colorOf} maps the states of the APTA
on the states (colors) of the DFA. The predicate \texttt{accColor}
describes the accepting states of the resulting automaton.

The theory expresses two constraints on \texttt{accColor}: accepting
states of the APTA must and rejecting states cannot be mapped to an
accepting state of the DFA. A third constraint states that each
transition in the APTA induces a transition (between colors) in the
DFA. 

The input structure, which is omitted, not only completely defines the
types \texttt{color}, \texttt{state} and \texttt{label} but also the
APTA. That implies that the \idpdrie grounder has complete
knowledge about the relations \texttt{accept} and \texttt{reject} and
the function \texttt{trans}.  Hence, for example, the grounder only
grounds the formula\\ \idpcode{colorTrans(colorOf(x),l)=colorOf(z)} for
tuples \texttt{(x,l,z)} for which \idpcode{trans(x,l)=z} is true
in the input structure. Further, the input structure also contains the
partial information about \texttt{colorOf} and \texttt{colorTrans}
that has been derived by the preprocessing.

The \texttt{main} procedure assumes that the input structure is named
\texttt{instance}; it calls the solver to search for a model and
prints it.

The above model is a very natural formulation of the problem and 
corresponds quite closely to a ``decompilation'' of the SAT
clauses expressing the constraints of the problem (Table 1 in both
\citeN{Verwer} and \citeN{Verwer12}).  The most noticeable difference is in a
redundant constraint which can be decompiled into:
\begin{lstlisting}[caption={Redundant constraint},label={fig:redundant-dfa}]
!v l w: trans(w,l)=v => 
             !j: colorTrans(colorOf(w),l)=j => colorOf(v)=j.
\end{lstlisting} 
This formula can be derived from the last formula in our theory
together with the fact that \texttt{colorOf} is a total function.

In \fodotidp it is very straightforward to extend the model with a term
counting the number of colors used and to minimize that number. This
makes our \fodotidp method very similar to the optimization method used
by \citeN{Verwer,Verwer12}. \idp, however, has several advantages over
an encoding constructed by hand. For instance, variants of the model
such as minimizing the number of
transitions in the DFA instead of the number of states are very
straightforward to obtain, but would require a major reengineering of
the SAT encoding.  This has practical value as different application
domains of DFA prefer different optimization criteria. Furthermore, it
is much easier to introduce bugs in handmade encodings than in the few
lines of \idp code. In fact, by analyzing and comparing the results of \idp
and the handmade encoding, we discovered some subtle bugs in 
the handmade translation that caused an incorrect answer in rare occasions.

\subsection{Experiments}


\begin{figure}[t]
\begin{center}
\scalebox{0.75}{
\includegraphics{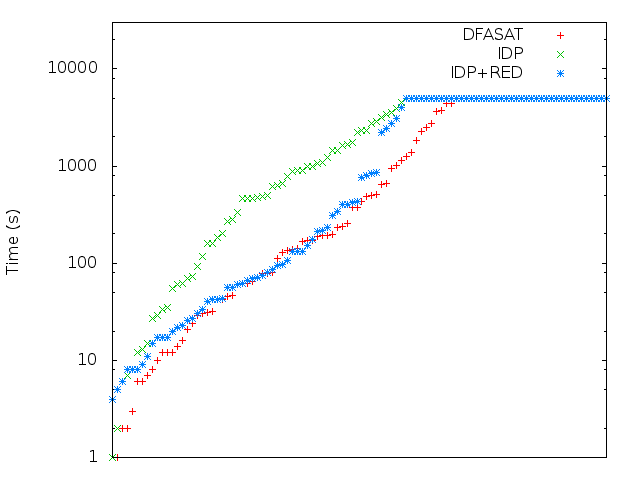}
}
\caption{Solving time for the SAT encoding (DFASAT), for the \fodotidp model
  of Listing~\ref{fig:idp-model-dfa} (IDP) and for that model extended
  with the redundant constraint of Listing~\ref{fig:redundant-dfa}
  (IDP+RED). Times are monotonically increasing, so the order of
  problem instances is different for each system. Timeout is set at
  5000s. 69 problems are solved by DFASAT, 59 by IDP and IDP+RED.}
\label{fig:solvingtime}
\end{center}
\end{figure}


We compared the performance of our model with that of the SAT encoding, denoted DFASAT, for 100 tough problems (the DFA is restricted to have only 5 states on top of those in the initial clique) from the 2010 Stamina DFA learning competition~\cite{stamina}.\footnote{Using an Intel$^R$ Core$^{TM}$ i5-2500 CPU at 3.30GHz with 7.7 GB of RAM Running Ubuntu. Memory use was limited to 4GB, time to 5000 seconds. \idpdrie was ran with standard options.} 

Figure~\ref{fig:solvingtime} compares the solving time of DFASAT with
that of two \fodotidp models. The first one is as shown in
Listing~\ref{fig:idp-model-dfa} (IDP); the second one extends the
model with the redundant constraint of Listing~\ref{fig:redundant-dfa}
(IDP+RED). One can observe that the redundant constraint improves the
performance of the \idpdrie system and that the performance comes
quite close to that of DFASAT. Still, DFASAT can solve more problems
than IDP+RED (69 vs.\ 59).  We also have to add that the
dedicated preprocessing that generates the SAT instances requires on
average 5 seconds while the grounding takes substantially more
time. For the IDP version, it is on average 124s; for ID+RED, the
average is 168s.
The results reported here are substantially better than those reported
in~\cite{iclp/Blockeeletall12}. By analyzing these earlier results, we
unraveled that the large performance gap was due to our grounding
being three times the size of the SAT encoding. This was caused by the
introduction of unneeded auxiliary predicates (so called Tseitins)
during the grounding. The problem was repaired in a new version of the
grounder. As mentioned before, our detailed
analysis also revealed subtle bugs in the dedicated preprocessing 
which generates the SAT encodings for DFASAT.
Thus, using an \idp implementation of a hard problem such as DFA
learning and comparing it to a fast competition winning SAT translation
was not only useful for improving \idp, but also for improving the 
competition winner.

It is very encouraging to observe that the performance of a tiny \emph{and}
comprehensible predicate logic model comes very close to that of an
ingeniously-tuned SAT-encoding that is a key component of a
competition winner.

\section{Conclusion}\label{sec:concl}

In this paper, we presented the \idpdrie system from a user's
perspective. We introduced the various components of an \fodotidp
model and illustrated their use in a model for the shortest path
problem. We also showed models for some problems encountered by
researchers in data mining and machine learning.  In a first problem
from stemmatology, \fodotidp models proved to be of invaluable help
for researchers trying to cope with stemma that go beyond tree
structures \cite{Andrewsetal12}. We obtained a model that not only
correctly handles arbitrary directed acyclic graphs, but also achieved
better performance than the original (incomplete) procedural code. In
the second problem, about phylogenetic trees, \fodotidp models helped
researchers to explore approximate solutions for an NP-hard
problem~\cite{ump}. The third problem we modeled is the classical
problem of learning a deterministic finite state automaton. We
compared an \fodotidp model with a state of the art SAT encoding of
the problem. Here we found that the performance of an \idpdrie
solution comes pretty close to that of a highly tuned SAT encoding.
These applications illustrate that \fodotidp models are a valuable
alternative for dedicated procedural code when novel data needs to be
analyzed and explored. Interestingly, in both problems where we
compared with an existing solution (stemmatology and DFA learning), we
uncovered some bugs in those solutions. It is fair to add that we also
uncovered some cases where the grounder of the \idpdrie system performed 
a suboptimal job. In the minimum common
supergraph application we found it deals poorly with multiple
occurrences of the same term in a formula; in the DFA application we
found that it introduced unneeded auxiliary symbols. While the latter
problem has already been solved, the former is, at the time of
writing, still on the todo list of the implementation team.


Our work is a further indication that the \idpdrie system is coming of
age. It was already known from the ASP-competitions that it compares
pretty well with ASP systems in terms of performance
\cite{lpnmr/DeneckerVBGT09,LPNRM/Calimeri11}. In contrast to ASP,
which relies on the stable semantics \cite{iclp/GelfondL88}, it is
based on first-order logic. The informal semantics of FO's connectives
and of the novel language constructs is clear and easy to understand.
This probably makes it easier for newcomers to start modeling.  For
example, the authors of the minimum common supergraph problem
\cite{ump} were neither familiar with Prolog nor with \fodotidp and
hardly needed any help from the IDP team.
The core of an \fodotidp model consists on the one hand of formulas in
first-order logic, which act as constraints, and on the other hand of
definitions, which are close to the rules of traditional logic
programs. Given interpretations for open predicates (the predicates
that are not defined in the theory), the definitions determine a
unique model through the well-founded semantics \cite{GelderRS91}. The
search results in an interpretation of the open predicates and hence a
model of the theory that is consistent with the constraints.  What
distinguishes \fodot from traditional logic programming is the use of
non-Herbrand interpretations and correspondingly, the lack of
constructor functions. This often leads to a simpler data
representation and gives rise to elegant model formulations. On the
other hand, there are cases where the rich data structures that arise
in Herbrand interpretations (compound terms, lists, trees, \ldots) are
useful too and these currently cannot easily be modeled in IDP3.
Another distinction is that the IDP framework offers other forms of
inference, most notably model expansion and model minimization. A
feature of the \idpdrie system is the integration of procedures in
\fodotidp models~\cite{inap/DePooterWD11} and the clean separation
between declarative and procedural components. As we illustrated in
the stemmatology application, this allows a user to develop a whole
workflow in an \fodotidp model.

The logic of \fodotidp extends predicate logic with inductive
definitions, types, arithmetic, aggregates and partial functions. Of
these, inductive definitions is the most fundamental one. The basis of
the language is predicate logic. In fact, in many applications, the
extensions only serve for making models more readable. For example,
the aggregates (in the form of quantifications $?<2$) in the shortest
path problem are directly translatable to FO and the (non-recursive)
definition in the minimum common supergraph problem is equivalent with
its completion.  Hence, three out of the four problems we describe in
this text are in fact solved with pure predicate logic models.

Our work on applications taught us also a few things about good
models. In all problems we solved in this paper, a class of objects is
separated in equivalence classes. (In the shortest path problem there
is the class of edges participating in the path, and the class of
other edges). It is tempting to represent these equivalence classes by
the transitive closure of some relation. However, the transitive
closure of a binary relation is expensive. It gives rise to large
groundings and this, together with the cost of checking for unfounded
sets, results in poor performance.  Binary transitive closures arise
naturally during modeling but they are better avoided in the IDP3
system. In the shortest path problem our first solution had a binary
transitive \texttt{reaches} relation. It required some creative
tinkering and awareness that binary transitive closures are harmful to
make the switch to the solution we presented.  Replacing it with the
unary \texttt{reachable} relation had a major impact on efficiency.
Also our first solution to the stemmatology problem had a transitive
closure. Here, transitive closure could be avoided altogether. It was
a major step forward in efficiency to replace it by a coloring
function for the nodes in the stemma graph. The other two problems
also use functions (\texttt{label} and \texttt{colorOf} respectively)
whose range defines membership in an equivalence class.

The preference of a unary transitive relation over a binary one is an
illustration of another general principle: less variables is better in
rules and constraints. One should try to break up complex rules and
constraints in simpler ones requiring less variables and explore
whether one can do with predicates and functions having less
arguments. Another important point is that one should not be satisfied
with a first correct model. Often, major improvements are possible, as
we illustrated in several of our applications.

The \idpdrie system is an evolving research system and further
improvements are on the way. A lot of ongoing work aims at making the
performance less dependent on clever modeling. One recent feature is
symmetry breaking. Predicate level symmetry detection and dynamic
symmetry breaking (during search) automatically exploit symmetries
present in the problem \cite{ictai/DevriendtBMDD12} (symmetry is
present in the DFA problem: permuting the colors gives another
solution; however it was broken in an ad-hoc way in the SAT encoding
and hence also in the input structure of our instances). One recent
feature is to avoid complete proposionalization during grounding. On
one hand by keeping function terms in the grounding
\cite{ictai/DeCaTB13}, on the other hand through lazy, demand driven
grounding during search \cite{iclp/DeCatDS12,corr/CatDSB14}. Another
feature is the detection of functional dependencies and their use to
reduce the arity of predicates \cite{tplp/CatB13}.

\section*{Acknowledgements}
Caroline Mac\'e and Tara Andrews introduced some of the authors to
stemmatology and provided the data sets; Tara also explained the
working of the procedural code.

This work was supported by Research Foundation - Flanders
(FWO-Vlaanderen) and by the Research Council of KU Leuven (GOA/08/008
and GOA 13/010). 


\end{document}